\documentclass{article}
\pdfoutput=1
%
\usepackage[nonatbib,final]{nips_2016}



\usepackage[utf8]{inputenc} 
\usepackage[T1]{fontenc}    
\usepackage{url}            
\usepackage{booktabs}       
\usepackage{amsfonts}       
\usepackage{amsmath}
\usepackage{amsthm}
\usepackage{amssymb}
\usepackage{nicefrac}       
\usepackage{microtype}      
\usepackage{graphicx}
\usepackage{subfig}

\newtheorem{lemma}{Lemma}
\newtheorem{corollary}{Corollary}

\newtheorem{theorem}{Theorem}

\DeclareMathOperator*{\argmax}{arg\,max}

\title{Reduced Space and Faster Convergence in Imperfect-Information \nolinebreak Games \nolinebreak via \nolinebreak Regret-Based \nolinebreak Pruning}

\author{
	Noam Brown \\
	Computer Science Department\\
	Carnegie Mellon University\\
	Pittsburgh, PA 15217 \\
	\texttt{noamb@cmu.edu} \\
	\And
	Tuomas Sandholm \\
	Computer Science Department \\
	Carnegie Mellon University \\
	Pittsburgh, PA 15217 \\
	\texttt{sandholm@cs.cmu.edu}
}

\begin{document}

\maketitle


\begin{abstract}
\emph{Counterfactual Regret Minimization} (CFR) is the most popular iterative algorithm for solving zero-sum imperfect-information games. \emph{Regret-Based Pruning} (RBP) is an improvement that allows poorly-performing actions to be temporarily pruned, thus speeding up CFR. We introduce \emph{Total RBP}, a new form of RBP that reduces the space requirements of CFR as actions are pruned. We prove that in zero-sum games it asymptotically prunes any action that is not part of a best response to some Nash equilibrium. This leads to provably faster convergence and lower space requirements. Experiments show that Total RBP results in an order of magnitude reduction in space, and the reduction factor increases with game size.
\end{abstract}

\section{Introduction}
Imperfect-information extensive-form games model strategic multi-step scenarios between agents with hidden information, such as auctions, security interactions (both physical and virtual), negotiations, and military situations. Typically in imperfect-information games, one wishes to find a Nash equilibrium, which is a profile of strategies in which no player can improve her outcome by unilaterally changing her strategy. A linear program can find an exact Nash equilibrium in two-player zero-sum games containing fewer than about $10^8$ nodes~\cite{Gilpin07:Lossless_long}. For larger games, iterative algorithms are used to converge to a Nash equilibrium. There are a number of such iterative algorithms~\cite{Nesterov05:Excessive,Hoda10:Smoothing,Pays14:Interior,Kroer15:Faster}, the most popular of which is \emph{Counterfactual Regret Minimization} (CFR)~\cite{Zinkevich07:Regret}. CFR minimizes regret independently at each decision point in the game. CFR+, a variant of CFR, was used to essentially solve Limit Texas Hold'em, the largest imperfect-information game ever to be essentially solved~\cite{Bowling15:Heads-up}.

Both computation time and storage space are difficult challenges when solving large imperfect-information games. For example, solving Limit Texas Hold'em required nearly 8 million core hours and a complex, domain-specific streaming compression algorithm to store the 262 TiB of uncompressed data in only 10.9 TiB. This data had to be repeatedly decompressed from disk into memory and then compressed back to disk in order to run CFR+~\cite{Tammelin15:Solving}.

\emph{Regret Based Pruning} (RBP) is an improvement to CFR that greatly reduces the computation time needed to solve large games by temporarily pruning suboptimal actions~\cite{Brown15:Regret-Based}. Specifically, if an action has negative regret, then RBP skips that action for the minimum number of iterations it would take for its regret to become positive in CFR. The skipped iterations are then ``made up'' in a single iteration once pruning ends.

In this paper we introduce a new form of RBP which we coin \emph{Total RBP}. It alters the starting and ending conditions for pruning, and changes the way regrets are updated once pruning ends. We refer to the prior form of RBP as \emph{Interval RBP} to differentiate it from our new method.

A primary advantage of Total RBP is that in addition to faster convergence, it also reduces the \emph{space} requirements of CFR over time. Specifically, once pruning begins on a branch, Total RBP ensures the regrets currently stored on that branch will never be needed again. In other words, all the data stored for a branch that is pruned can be discarded, and the space allocated to that branch can be freed. Space need not be reallocated for that branch until pruning ends and the branch cannot immediately be pruned again. In Section~\ref{sec:rbpspace}, we prove that after enough iterations are completed, space for certain pruned branches will \emph{never} need to be allocated again. Specifically, we prove that Total RBP need only asymptotically store actions that have positive probability in a best response to a Nash equilibrium. This is extremely advantageous when solving large imperfect-information games, which are often constrained by space and in which the set of best response actions may be orders of magnitude smaller than the size of the game.

While Total RBP still requires enough memory to store the entire game in the early iterations, recent work has shown that these early iterations can be skipped by first solving a low-memory abstraction of the game and then using its solution to warm starting CFR in the full game~\cite{Brown16:Strategy-Based}. Total RBP's reduction in space is also helpful to the \emph{Simultaneous Abstraction and Equilibrium Finding} (SAEF) algorithm~\cite{Brown15:Simultaneous}, which starts CFR with a small abstraction of the game and progressively expands the abstraction while also solving the game. SAEF's space requirements increase the longer the algorithm runs, and may eventually exceed the constraints of a system. Total RBP can counter this increase in space by eliminating the need to store suboptimal paths of the game tree.

While prior work on Interval RBP has shown empirical evidence of better performance, this paper proves that CFR converges faster when using Total RBP, because certain suboptimal actions will only need to be traversed $O\big(\ln(T)\big)$ times over $T$ iterations.


The magnitude of these gains in speed and space varies depending on the game. It is possible to construct games where Total RBP provides no benefit. However, if there are many suboptimal actions in the game---as is frequently the case in large games---Total RBP can speed up CFR by multiple orders of magnitude and require orders of magnitude less space. Our experiments show an order of magnitude space reduction already in medium-sized games, and a reduction factor increase with game size.

\vspace{-0.075in}
\section{Background}
\vspace{-0.075in}
\label{sec:background}
This section presents the notation used in the rest of the paper. An imperfect-information extensive-form game
has a finite set of players, $\mathcal{P}$. $H$ is the set of all possible histories (nodes) in the game tree, represented as a sequence of actions, and includes the empty history. $A(h)$ is the actions available in a history and $P(h) \in \mathcal{P} \cup c$ is the player who acts at that history, where $c$ denotes chance. Chance plays an action $a \in A(h)$ with a fixed probability $\sigma_c(h,a)$ that is known to all players. The history $h'$ reached after action $a$ in $h$ is a child of $h$, represented by $h \cdot a = h'$, while $h$ is the parent of $h'$. More generally, $h'$ is an ancestor of $h$ (and $h$ is a descendant of $h'$), represented by $h' \sqsubset h$, if there exists a sequence of actions from $h'$ to $h$. $Z \subseteq H$ are terminal histories for which no actions are available. For each player $i \in \mathcal{P}$, there is a payoff function $u_i: Z\rightarrow \Re$. If $P = \{1, 2\}$ and $u_1 = -u_2$, the game is two-player zero-sum. Define $\Delta_i = \max_{z \in Z} u_i(z) - \min_{z \in Z} u_i(z)$ and $\Delta = \max_i\Delta_i$.

Imperfect information is represented by \emph{information sets} for each player $i \in \mathcal{P}$ by a partition $\mathcal{I}_i$ of $h \in H : P(h) = i$. For any information set $I \in \mathcal{I}_i$, all histories $h, h' \in I$ are indistinguishable to player $i$, so $A(h) = A(h')$.
$I(h)$ is the information set $I$ where $h \in I$. $P(I)$ is the player $i$ such that $I \in \mathcal{I}_i$. $A(I)$ is the set of actions such that for all $h \in I$, $A(I) = A(h)$.
$|A_i| = \max_{I \in \mathcal{I}_i}|A(I)|$ and $|A| = \max_i|A_i|$. Define $U(I)$ to be the maximum payoff reachable from a history in $I$, and $L(I)$ to be the minimum. That is, $U(I) = \max_{z \in Z, h \in I : h \sqsubseteq z} u_{P(I)}(z)$ and $L(I) = \min_{z \in Z, h \in I : h \sqsubseteq z} u_{P(I)}(z)$. Define $\Delta(I) = U(I) - L(I)$ to be the range of payoffs reachable from a history in $I$. Similarly $U(I,a)$, $L(I,a)$, and $\Delta(I,a)$ are the maximum, minimum, and range of payoffs (respectively) reachable from a history in $I$ after taking action $a$.
Define $D(I,a)$ to be the set of information sets reachable by player $P(I)$ after taking action $a$. Formally, $I' \in D(I,a)$ if for some history $h \in I$ and $h' \in I'$, $h \cdot a \sqsubseteq h'$ and $P(I) = P(I')$.

A strategy $\sigma_i(I)$ is a probability vector over $A(I)$ for player $i$ in information set $I$. The probability of a particular action $a$ is denoted by $\sigma_i(I,a)$. Since all histories in an information set belonging to player $i$ are indistinguishable, the strategies in each of them must be identical. That is, for all $h \in I$, $\sigma_i(h) = \sigma_i(I)$ and $\sigma_i(h,a) = \sigma_i(I,a)$. Define $\sigma_i$ to be a probability vector for player $i$ over all available strategies $\Sigma_i$ in the game. A strategy profile $\sigma$ is a tuple of strategies, one for each player. $u_i(\sigma_i, \sigma_{-i})$ is the expected payoff for player $i$ if all players play according to the strategy profile $\langle \sigma_i, \sigma_{-i} \rangle$. If a series of strategies are played over $T$ iterations, then $\bar{\sigma}^T_i = \frac{\sum_{t \in T} \sigma^t_i}{T}$.

$\pi^{\sigma}(h) = \Pi_{h' \rightarrow a \sqsubseteq h} \sigma_{P(h)}(h,a)$ is the joint probability of reaching $h$ if all players play according to $\sigma$. $\pi^{\sigma}_i(h)$ is the contribution of player $i$ to this probability (that is, the probability of reaching $h$ if all players other than $i$, and chance, always chose actions leading to $h$). $\pi^{\sigma}_{-i}(h)$ is the contribution of all players other than $i$, and chance.
$\pi^{\sigma}(h,h')$ is the probability of reaching $h'$ given that $h$ has been reached, and $0$ if $h \not \sqsubset h'$. In a \emph{perfect-recall} game, $\forall h, h' \in I \in \mathcal{I}_i$, $\pi_i(h) = \pi_i(h')$. In this paper we focus on perfect-recall games.
Therefore, for $i = P(I)$ define $\pi_i(I) = \pi_i(h)$ for $h \in I$. Moreover, $I' \sqsubset I$ if for some $h' \in I'$ and some $h \in I$, $h' \sqsubset h$. Similarly, $I' \cdot a \sqsubset I$ if $h' \cdot a \sqsubset h$. The average strategy $\bar{\sigma}_i^T(I)$ for an information set $I$ is defined as $\bar{\sigma}_i^T(I) = \frac{\sum_{t \in T} \pi_i^{\sigma^t_i}(I)\sigma_i^t(I)}{\sum_{t \in T} \pi_i^{\sigma^t}(I)}$.

A \emph{best response} to $\sigma_{-i}$ is a strategy $\sigma^*_i$ such that $u_i(\sigma^*_i, \sigma_{-i}) = \max_{\sigma'_i \in \Sigma_i} u_i(\sigma'_i, \sigma_{-i})$. A \emph{Nash equilibrium} $\sigma^*$ is a strategy profile where every player plays a best response: $\forall i$, $u_i(\sigma^*_i, \sigma^*_{-i}) = \max_{\sigma'_i \in \Sigma_i} u_i(\sigma'_i, \sigma^*_{-i})$. A \emph{Nash equilibrium strategy} for player $i$ as a strategy $\sigma_i$ that is part of any Nash equilibrium. In two-player zero-sum games, if $\sigma_i$ and $\sigma_{-i}$ are both Nash equilibrium strategies, then $\langle \sigma_i, \sigma_{-i} \rangle$ is a Nash equilibrium. An \emph{$\epsilon$-equilibrium} as a strategy profile $\sigma^*$ such that $\forall i$, $u_i(\sigma^*_i, \sigma^*_{-i}) + \epsilon \ge \max_{\sigma'_i \in \Sigma_i} u_i(\sigma'_i, \sigma^*_{-i})$.

\subsection{Counterfactual Regret Minimization}
\emph{Counterfactual Regret Minimization (CFR)} is a popular algorithm for extensive-form games in which the strategy vector for each information set is determined according to a regret-minimization algorithm~\cite{Zinkevich07:Regret}. We use \emph{regret matching (RM)}~\cite{Hart00:Simple} as the regret-minimization algorithm.

The analysis of CFR makes frequent use of \emph{counterfactual value}. Informally, this is the expected utility of an information set given that player $i$ tries to reach it. For player $i$ at information set $I$ given a strategy profile $\sigma$, this is defined as
\begin{equation}
v^{\sigma}(I) = \sum_{h \in I} \Big(\pi_{-i}^{\sigma}(h)\sum_{z \in Z}\big(\pi^{\sigma}(h,z)u_i(z)\big)\Big)
\label{eq:counterfactual}
\end{equation}
The counterfactual value of an action $a$ is
\begin{equation}
v^{\sigma}(I,a) = \sum_{h \in I} \Big(\pi_{-i}^{\sigma}(h)\sum_{z \in Z}\big(\pi^{\sigma}(h \cdot a,z)u_i(z)\big)\Big)
\label{eq:counterfactuala}
\end{equation}

A \emph{counterfactual best response}~\cite{Moravcik16:Refining} (CBR) is a strategy similar to a best response, except that it maximizes counterfactual value even at information sets that it does not reach due to its earlier actions. Specifically, a counterfactual best response to $\sigma_{-i}$ is a strategy $CBR(\sigma_{-i})$ such that if $CBR(\sigma_{-i})(I,a) > 0$ then $v^{\langle CBR(\sigma_{-i}), \sigma_{-i} \rangle}(I,a) = \max_{a'} v^{\langle CBR(\sigma_{-i}), \sigma_{-i} \rangle}(I,a')$. The \emph{counterfactual best response value} $CBV^{\sigma_{-i}}(I)$ is similar to counterfactual value, except that player $i = P(I)$ plays according to a CBR to $\sigma_{-i}$. Formally, $CBV^{\sigma_{-i}}(I) = v^{\langle CBR_i(\sigma_{-i}), \sigma_{-i} \rangle}(I)$.

Let $\sigma^t$ be the strategy profile used on iteration $t$. The \emph{instantaneous regret}
on iteration $t$ for action $a$ in information set $I$ is
$r^t(I,a) = v^{\sigma^t}(I,a) - v^{\sigma^t}(I)$
and the \emph{regret} for action $a$ in $I$
on iteration $T$ is
$R^T(I,a) = \sum_{t \in T} r^t(I,a)$.
Additionally, $R^T_+(I,a) = \max\{R^T(I,a), 0 \}$ and $R^T(I) = \max_a\{R_+^T(I,a)\}$. Our analysis of Total RBP will occasionally reference the \emph{potential function} of $R(I)$, defined as $\Phi(R^T(I)) = \sum_{a \in A(I)} \big(R_+^T(I,a)\big)^2$. Regret
for player $i$ in the entire game is
$R_i^T = \max_{\sigma_i' \in \Sigma_i} \sum_{t \in T} \big(u_i(\sigma'_i, \sigma_{-i}^t) - u_i(\sigma^t_i, \sigma_{-i}^t)\big)$
.

In regret matching, a player picks a distribution over actions in an information set in proportion to the positive regret on those actions. Formally, on each iteration $T+1$, player $i$ selects actions $a \in A(I)$ according to probabilities
\vspace{-0.04in}
\begin{equation}
\sigma^{T+1}(I,a) =
\begin{cases}
\frac{R^T_+(I,a)}{\sum_{a' \in A(I)}R_+^T(I,a')}, & \text{if}\ \sum_{a'}R^T_+(I,a') > 0 \\
\frac{1}{|A(I)|}, & \text{otherwise}
\end{cases}
\label{eq:rm}
\end{equation}
If a player plays according to RM on every iteration then on iteration $T$, $R^T(I) \le \Delta(I)\sqrt{|A(I)|}\sqrt{T}$.

If a player plays according to CFR in every iteration then
$R_i^T \le \sum_{I \in \mathcal{I}_i} R^T(I)$.
So, as $T \rightarrow \infty$, $\frac{R_i^T}{T} \rightarrow 0$. In two-player zero-sum games, if both players' average regret $\frac{R_i^T}{T} \le \epsilon$, their average strategies $\langle \bar{\sigma}^T_1, \bar{\sigma}^T_2 \rangle$ form a $2\epsilon$-equilibrium~\cite{Waugh09:Abstraction}. Thus, CFR constitutes an anytime algorithm for finding an $\epsilon$-Nash equilibrium in zero-sum games.

\subsection{Partial Pruning and Interval Regret-Based Pruning}
\label{sec:pruning}
This section reviews forms of pruning that allow parts of the game tree to be skipped in CFR. Typically, regret is updated by traversing each node in the game tree separately for each player, and calculating the contribution of a history $h \in I$ to $r^t(I,a)$ for each action $a \in A(I)$. If a history $h$ is reached in which $\pi_{-i}^{\sigma^t}(h) = 0$ (that is, an opponent's reach is zero), then from (\ref{eq:counterfactual}) and (\ref{eq:counterfactuala}) the strategy at $h$ contributes nothing on iteration $t$ to the regret of $I(h)$ (or to the information sets above it). Moreover, any history that would be reached beyond $h$ would also contribute nothing to its information set's regret because $\pi^{\sigma^t}_{-i}(h') = 0$ for every history $h'$ where $h \sqsubset h'$ and $P(h') = P(h)$. Thus, when traversing the game tree for player $i$, there is no need to traverse beyond any history $h$ when $\pi_{-i}^{\sigma^t}(h) = 0$. The benefit of this form of pruning, which we refer to as \emph{partial pruning}, varies depending on the game, but empirical results show a factor of $30$ improvement in some games~\cite{Lanctot09:Monte}.


While partial pruning allows one to prune paths that an \emph{opponent} reaches with zero probability, \emph{interval regret-based pruning} (Interval RBP) allows one to also prune paths that the \emph{traverser} reaches with zero probability~\cite{Brown15:Regret-Based}. However, this pruning is necessarily temporary. Consider an action $a \in A(I)$ such that $\sigma^t(I,a) = 0$, and assume that it is known action~$a$ will not be played with positive probability until some far-future iteration $t'$ (in RM, this would be the case if $R^t(I,a) \ll 0$). Since action~$a$ is played with zero probability on iteration $t$, so from (\ref{eq:counterfactual}) the strategy played and reward received following action~$a$ (that is, in $D(I,a)$) will not contribute to the regret for any information set preceding action~$a$ on iteration $t$. In fact, what happens in $D(I,a)$ has no bearing on the rest of the game tree until iteration $t'$ is reached. So one could, in theory, ``procrastinate'' in deciding what happened beyond action~$a$ on iteration $t$, $t+1$, ..., $t' - 1$ until iteration $t'$.

However, upon reaching iteration $t'$, rather than individually making up the $t' - t$ iterations over $D(I,a)$, one can instead do a \emph{single} iteration, playing against the average of the opponents' strategies in the $t' - t$ iterations that were missed, and declare that strategy was played on all the $t' - t$ iterations. This accomplishes the work of the $t' - t$ iterations in a single traversal. Moreover, since player $i$ never plays action $a$ with positive probability between iterations $t$ and $t'$, that means every \emph{other} player can apply partial pruning on that part of the game tree for iterations $t' - t$, and skip it completely. This, in turn, means that player $i$ has free rein to play whatever they want in $D(I,a)$ without affecting the regrets of the other players. In light of that, and of the fact that player $i$ gets to decide what is played in $D(I,a)$ after knowing what the other players have played, player $i$ might as well play a strategy that ensures zero regret for all information sets $I' \in D(I,a)$ in the iterations $t$ to $t'$. A CBR to the average of the opponent strategies on the $t' - t$ iterations would qualify as such a zero-regret strategy.

Interval regret-based pruning only allows a player to skip traversing $D(I,a)$ for as long as $\sigma^t(I,a) = 0$. Thus, in RM,  if $R^{t_0}(I,a) < 0$, we can prune the game tree beyond action $a$ from iteration $t_0$ until iteration $t_1$ so long as $\sum_{t = 1}^{t_0}v^{\sigma^t}(I,a) + \sum_{t = t_0 + 1}^{t_1} \pi_{-i}^{\sigma^t}(I) U(I,a) \le \sum_{t=1}^{t_1} v^{\sigma^t}(I)$.

\vspace{-0.075in}
\section{\emph{Total RBP}: A New Form of Regret-Based Pruning}
\vspace{-0.075in}
\label{sec:warmrbp}

This section introduces a new form of RBP which we coin \emph{Total RBP}.
When pruning ends and regret must be updated in the pruned branch, Interval RBP calculates a CBR to the average opponent strategy over the skipped iterations, and updates regret in the pruned branch as if that CBR strategy were played on each of the skipped iterations. By contrast, when pruning ends in Total RBP, it calculates a CBR in the pruned branch against the opponent's average strategy over \emph{all} iterations played so far, and sets regret as if that CBR strategy were played on \emph{every} iteration played in the game so far---even those that were played before pruning began.

While using a CBR works correctly in Total RBP, it is also sound to choose a strategy that is \emph{almost} a CBR (formalized later in this section), as long as that strategy does not result in a violation of the CFR bound on the potential function $\Phi(R^T(I))$ of any information set $I$. In practice, this means that the strategy is close to a CBR, and approaches a CBR as $T \rightarrow \infty$. We now present the theory to show that such a near-CBR can be used. However, in practice CFR converges much faster than the theoretical bound, so the potential function is typically far lower than the theoretical bound. Thus, while choosing a near-CBR rather than an exact CBR may allow for slightly longer pruning according to the theory, it may actually result in worse performance. All of the theoretical results presented in this paper, including the improved convergence bound as well as the lower space requirements, still hold if only a CBR is used, and our experiments use a CBR. Nevertheless, clever heuristics for deciding on a near-CBR may lead to even better performance in practice.

We define a strategy $NBR(\sigma_{-i},T)$ as a $T$\emph{-near counterfactual best response} ($T$-near CBR) to $\sigma_{-i}$ if for all $I$ belonging to player $i$
\begin{equation}
\sum_{a \in A(I)} \big(v^{\langle NBR(\sigma_{-i},T), \sigma_{-i} \rangle}(I,a) - v^{\langle NBR(\sigma_{-i},T), \sigma_{-i} \rangle}(I)\big)_+^2 \le \frac{x^T_{I}}{T^2}
\label{eq:nearbest}
\end{equation}
where $x^T_{I}$ can be any value in the range $0 \le x^T_{I} \le y^T_{I}$ and $y^T_{I}$ is the CFR bound on $\Phi(R^T(I))$. If $x^T_{I} = 0$, then a $T$-near CBR is always a CBR. The set of strategies that are $T$-near CBRs to $\sigma_{-i}$ is represented as $\Sigma^{NBR}(\sigma_{-i},T)$. We also define the $T$\emph{-near counterfactual best response value} as $NBV^{\sigma_{-i},T}(I,a) = \min_{\sigma'_i \in \Sigma^{NBR}(\sigma_{-i},T)} v^{\langle \sigma'_i, \sigma_{-i} \rangle}(I,a)$ and $NBV^{\sigma_{-i},T}(I) = \min_{\sigma'_i \in \Sigma^{NBR}(\sigma_{-i},T)} v^{\langle \sigma'_i, \sigma_{-i} \rangle}(I)$.

In Total RBP, an action is pruned only if it would still have negative regret had a $T$-near CBR against the opponent's average strategy been played on every iteration. Specifically, on iteration $T$ of CFR with RM, if
\vspace{-0.09in}
\begin{equation}
T\big(NBV^{\bar{\sigma}^T_{-i},T}(I,a)\big) \le \sum_{t = 1}^T v^{\sigma^t}(I)
\label{eq:rbpcondition}
\end{equation}
then $D(I,a)$ can be pruned for
\vspace{-0.09in}
\begin{equation}
T' = \frac{\sum_{t = 1}^T v^{\sigma^t}(I) - NBV^{\bar{\sigma}^T_{-i},T}(I,a)}{U(I,a) - L(I)}
\label{eq:prunesimple}
\end{equation}
iterations. After those $T'$ iterations are over, we calculate a $T+T'$-near CBR in $D(I,a)$ to the opponent's average strategy and set regret as if that $T+T'$-near CBR had been played on every iteration. Specifically, for each $t \le T + T'$ we set\footnote{In practice, only the sums $\sum_{t=1}^Tv^{\sigma^t}(I)$ and either $\sum_{t=1}^Tv^{\sigma^t}(I,a)$ or $R^T(I,a)$ are stored.} $v^{\sigma^t}(I,a) = NBV^{\bar{\sigma}^{T+T'}_{-i},T+T'}(I,a)$ so that
\begin{equation}
R^{T + T'}(I,a) = \big(T + T'\big)\big(NBV^{\bar{\sigma}^{T+T'}_{-i},T+T'}(I,a)\big) - \sum_{t = 1}^{T + T'} v^{\sigma^t}(I)
\label{eq:pruneregreta}
\end{equation}
and for every information set $I' \in D(I,a)$ we set $v^{\sigma^t}(I',a') = NBV^{\bar{\sigma}^{T+T'}_{-i},T+T'}(I',a')$ and $v^{\sigma^t}(I') = NBV^{\bar{\sigma}^{T+T'}_{-i},T+T'}(I')$ so that
\begin{equation}
R^{T + T'}(I',a') = \big(T + T'\big)\big(NBV^{\bar{\sigma}^{T+T'}_{-i},T+T'}(I',a') - NBV^{\bar{\sigma}^{T+T'}_{-i},T+T'}(I')\big)
\label{eq:pruneregret}
\end{equation}

Theorem~\ref{theorem:totalrbp} proves that if (\ref{eq:rbpcondition}) holds for some action, then the action can be pruned for $T'$ iterations, where $T'$ is defined in (\ref{eq:prunesimple}). The same theorem holds if one replaces the $T$-near counterfactual best response values with counterfactual best response values. The proof for Theorem~\ref{theorem:totalrbp} draws from recent work on warm starting CFR using only an average strategy profile~\cite{Brown16:Strategy-Based}. Essentially, we warm start regrets in the pruned branch using only the average strategy of the opponent and knowledge of $T$.
\begin{theorem}
Assume $T$ iterations of CFR with RM have been played in a two-player zero-sum game and assume $T\big(NBV^{\bar{\sigma}^{T}_{-i},T}(I,a)\big) \le \sum_{t=1}^T v^{\sigma^t}(I)$ where $P(I) = i$. Let $T' = \lfloor \frac{\sum_{t = 1}^T v^{\sigma^t}(I) - T\big(NBV^{\bar{\sigma}^{T}_{-i},T}(I,a)\big)}{U(I,a) - L(I)} \rfloor$. If both players play according to CFR with RM for the next $T'$ iterations in all information sets $I'' \not \in D(I,a)$ except that $\sigma(I,a)$ is set to zero and $\sigma(I)$ is renormalized, then $(T + T')\big(NBV^{\bar{\sigma}^{T+T'}_{-i},T+T'}(I,a)\big) \le \sum_{t=1}^{T+T'} v^{\sigma^t}(I)$. Moreover, if one then sets $v^{\sigma^t}(I,a) = NBV^{\bar{\sigma}^{T+T'}_{-i},T+T'}(I,a)$ for each $t \le T + T'$ and $v^{\sigma^t}(I',a') = NBV^{\bar{\sigma}^{T+T'}_{-i},T+T'}(I',a')$ for each $I' \in D(I,a)$, then after $T''$ additional iterations of CFR with RM, the bound on exploitability of $\bar{\sigma}^{T + T' + T''}$ is no worse than having played $T + T' + T''$ iterations of CFR with RM without RBP.
\label{theorem:totalrbp}
\end{theorem}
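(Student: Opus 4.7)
The plan is to decompose the statement into two claims corresponding to its two sentences: (A) the pruning condition $(T{+}T')\,NBV^{\bar{\sigma}^{T+T'}_{-i},T+T'}(I,a) \le \sum_{t=1}^{T+T'} v^{\sigma^t}(I)$ continues to hold after $T'$ more iterations, and (B) the prescribed substitution of $v^{\sigma^t}$ values at $I$ and in $D(I,a)$ preserves the standard CFR exploitability bound for the next $T''$ iterations. Both parts lean on the warm-starting framework of~\cite{Brown16:Strategy-Based}: the substitution in (B) effectively treats a $T{+}T'$-near CBR as if it had been played on every iteration, and the point is to verify that this ``synthetic history'' is compatible with the CFR potential-function analysis.

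For (A), abbreviate $NBV^T$ for $NBV^{\bar{\sigma}^T_{-i},T}$. The elementary bounds $v^{\langle \sigma_i, \sigma^t_{-i}\rangle}(I,a) \le \pi_{-i}^{\sigma^t}(I)\,U(I,a)$ and $v^{\sigma^t}(I) \ge \pi_{-i}^{\sigma^t}(I)\,L(I)$ combine to give $v^{\langle \sigma_i, \sigma^t_{-i}\rangle}(I,a) - v^{\sigma^t}(I) \le U(I,a) - L(I)$ per iteration, for any strategy $\sigma_i$ (using $\pi_{-i} \le 1$). I would pick $\sigma_i$ to be a $T{+}T'$-near CBR to $\bar{\sigma}^{T+T'}_{-i}$ that agrees on the first $T$ iterations with a $T$-near CBR minimizer of $NBV^T(I,a)$, so that linearity yields $(T{+}T')\,NBV^{T+T'}(I,a) \le T\,NBV^T(I,a) + \sum_{t=T+1}^{T+T'} v^{\langle \sigma_i, \sigma^t_{-i}\rangle}(I,a)$. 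Subtracting $\sum_{t=1}^{T+T'} v^{\sigma^t}(I)$, applying the hypothesis to the first $T$ terms and the per-iteration bound to the last $T'$ terms bounds the gap by $T'(U(I,a)-L(I)) - A_T$, where $A_T = \sum_{t=1}^T v^{\sigma^t}(I) - T\,NBV^T(I,a) \ge 0$. By the floor definition of $T'$, this is $\le 0$.

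For (B), the substitution inside $D(I,a)$ gives $R^{T+T'}(I',a') = (T{+}T')\big(NBV^{T+T'}(I',a') - NBV^{T+T'}(I')\big)$, so
\[
\Phi(R^{T+T'}(I')) = (T{+}T')^2 \sum_{a'}\big(NBV^{T+T'}(I',a') - NBV^{T+T'}(I')\big)_+^2 \le x^{T+T'}_{I'} \le y^{T+T'}_{I'},
\]
matching the CFR potential bound exactly by the near-CBR definition~(\ref{eq:nearbest}). At $I$ itself, part (A) forces $R^{T+T'}(I,a) \le 0$, so the pruned action contributes nothing to $\Phi(R^{T+T'}(I))$. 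The post-update potential state is therefore indistinguishable from $T{+}T'$ genuine iterations of unpruned CFR, and the standard analysis applied over the next $T''$ iterations produces the same exploitability bound.

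The main obstacle I expect is justifying the strategy construction in (A): a $T$-near CBR to $\bar{\sigma}^T_{-i}$ is not automatically a $T{+}T'$-near CBR to $\bar{\sigma}^{T+T'}_{-i}$, since the slack $x^T_I/T^2$ in~(\ref{eq:nearbest}) depends on both the opponent average and $T$. The cleanest route is to prove the variant with $CBV$ in place of $NBV$ first, where the max structure makes $T\,v^{\langle CBR^{T+T'}, \bar{\sigma}^T_{-i}\rangle}(I,a) \le T\,CBV^T(I,a)$ automatic; the near-CBR version is then recovered by invoking the warm-starting analysis of~\cite{Brown16:Strategy-Based}, which quantifies exactly the approximation slack needed to bridge the two definitions.
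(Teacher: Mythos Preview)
Your decomposition into (A) and (B) and the arguments you give for each are essentially identical to the paper's proof. The paper packages your part (B) as a standalone Lemma~1 (the warm-start lemma), whose proof is exactly the potential-function check you describe: the near-CBR definition~(\ref{eq:nearbest}) bounds $\Phi(R^{T+T'}(I'))$ for $I'\in D(I,a)$, and $R^{T+T'}(I,a)\le 0$ leaves $\Phi(R^{T+T'}(I))$ untouched, so Theorem~2 of~\cite{Brown16:Strategy-Based} applies. For part (A) the paper uses precisely your per-iteration bound, writing it as the pair of inequalities $(T{+}T')\,NBV^{T+T'}(I,a)\le T\,NBV^{T}(I,a)+T'\,U(I,a)$ and $\sum_{t=1}^{T+T'} v^{\sigma^t}(I)\ge \sum_{t=1}^{T} v^{\sigma^t}(I)+T'\,L(I)$, then invoking the definition of $T'$.

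The obstacle you flag---that a $T$-near CBR to $\bar\sigma^T_{-i}$ need not be a $(T{+}T')$-near CBR to $\bar\sigma^{T+T'}_{-i}$---is real, and the paper does not address it either: it simply asserts the first inequality above. Your proposed workaround (prove the $CBV$ version first, then bridge) is not obviously sufficient, since the $NBV$ hypothesis is \emph{weaker} than the $CBV$ hypothesis, so the $CBV$ theorem does not directly imply the $NBV$ one. In short, you are being more careful than the paper on this point, but your argument and the paper's are the same, and both leave this step at the level of an assertion; the $CBV$ version (which the paper notes is what is used in practice) goes through cleanly by your subadditivity-of-max observation.
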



In practice, rather than check whether (\ref{eq:rbpcondition}) is met for an action on every iteration, one could only check actions that have very negative regret, and do a check only once every several iterations. This would still be safe and would save some computational cost of the checks, but would lead to less pruning.

As is the case with Interval RBP, the duration of pruning can be increased by giving up knowledge beforehand of exactly how many iterations can be skipped. From (\ref{eq:counterfactuala}) and (\ref{eq:counterfactual}) we see that $r^T(I,a) \le \pi_{-i}^{\sigma^t}(I) \big(U(I,a) - L(I)\big)$. Thus, if $\pi_{-i}^{\sigma^t}(I)$ is very low, then (\ref{eq:rbpcondition}) would continue to hold for more iterations than (\ref{eq:prunesimple}) guarantees. Specifically, we can prune $D(I,a)$ from iteration $t_0$ until iteration $t_1$ as long as
\vspace{-0.1in}
\begin{equation}
t_0 \big(NBV^{\bar{\sigma}^{t_0}_{-i},t_0}(I,a)\big) + \sum_{t = t_0 + 1}^{t_1} \pi_{-i}^{\sigma^t}(I) U(I,a) \le \sum_{t=1}^{t_1} v^{\sigma^t}(I)
\label{eq:totalrbp}
\end{equation}
\vspace{-0.2in}
\subsection{Total RBP Requires Less Space}
\label{sec:rbpspace}
A key advantage of Total RBP is that setting the new regrets according to (\ref{eq:pruneregreta}) and (\ref{eq:pruneregret}) requires no knowledge of what the regrets were before pruning began. Thus, once pruning begins, all the regrets in $D(I,a)$ can be discarded and the space that was allocated to storing the regret can be freed. That space need only be re-allocated once (\ref{eq:totalrbp}) ceases to hold \emph{and} we cannot immediately begin pruning again (that is, (\ref{eq:rbpcondition}) does not hold). Theorem~\ref{theorem:rbpspace} proves that for any information set $I$ and action $a \in A(I)$ that is not part of a best response to a Nash equilibrium, there is an iteration $T_{I,a}$ such that for all $T \ge T_{I,a}$, action $a$ in information set $I$ (and its descendants) can be pruned.\footnote{If CFR converges to a \emph{particular} Nash equilibrium, then this condition could be broadened to any information set $I$ and action $a \in A(I)$ that is not a best response to \emph{that particular} Nash equilibrium. While empirically CFR does appear to always converge to a particular Nash equilibrium, there is no known proof that it always does so.} Thus, once this $T_{I,a}$ is reached, it will never be necessary to allocate space for regret in $D(I,a)$ again.
\begin{theorem}
In a two-player zero-sum game, if for every opponent Nash equilibrium strategy $\sigma_{-P(I)}^*$, $CBV^{\sigma_{-P(I)}^*}(I,a) < CBV^{\sigma_{-P(I)}^*}(I)$, then there exists a $T_{I,a}$ and $\delta_{I,a} > 0$ such that after $T \ge T_{I,a}$ iterations of CFR, $CBV^{\bar{\sigma}^T_{-i}}(I,a) - \frac{\sum_{t = 1}^T v^{\sigma^t}(I)}{T} \le -\delta_{I,a}$.
\label{theorem:rbpspace}
\end{theorem}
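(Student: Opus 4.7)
Write $i := P(I)$ throughout. The plan is to combine three ingredients: CFR's convergence to the set of Nash equilibria in two-player zero-sum games, a compactness--continuity argument that lifts the pointwise hypothesis to a uniform gap over all opponent Nash equilibrium strategies, and the standard localized CFR regret decomposition that ties $\bar v^T(I) := \tfrac{1}{T}\sum_{t=1}^T v^{\sigma^t}(I)$ to $CBV^{\bar\sigma^T_{-i}}(I)$ up to an $O(1/\sqrt{T})$ error.

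First, I would argue that $\mathrm{dist}(\bar\sigma^T,\mathcal{E})\to 0$, where $\mathcal{E}$ denotes the (nonempty, compact) set of Nash equilibria. Since both players run CFR with RM, $R_j^T/T = O(1/\sqrt{T})$ for $j=1,2$, so $\bar\sigma^T$ is an $\epsilon_T$-equilibrium with $\epsilon_T\to 0$. Any subsequential limit is a genuine Nash equilibrium by continuity of utilities and compactness of strategy spaces, and a standard topological argument upgrades this to $\bar\sigma^T\to\mathcal{E}$ (any subsequence staying outside a fixed neighborhood of $\mathcal{E}$ would have a limit point that is still a Nash equilibrium---a contradiction).

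Second, the map $\sigma_{-i}\mapsto CBV^{\sigma_{-i}}(I) - CBV^{\sigma_{-i}}(I,a)$ is continuous on the compact strategy simplex, since each $CBV$ term is a maximum over the compact set $\Sigma_i$ of a multilinear function. By hypothesis it is strictly positive on the projected equilibrium set $\mathcal{E}_{-i}$, and $\mathcal{E}_{-i}$ is compact, so the map attains some positive minimum $\delta_0>0$ there. By uniform continuity the gap is at least $\delta_0/2$ on some open neighborhood $U_{-i}$ of $\mathcal{E}_{-i}$, and by Step~1 we have $\bar\sigma^T_{-i}\in U_{-i}$ for all sufficiently large $T$, yielding $CBV^{\bar\sigma^T_{-i}}(I)-CBV^{\bar\sigma^T_{-i}}(I,a)\ge \delta_0/2$.

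Finally, I would connect $\bar v^T(I)$ to $CBV^{\bar\sigma^T_{-i}}(I)$. For any single strategy $\sigma'_i$, the perfect-recall form of Kuhn's theorem gives $\tfrac{1}{T}\sum_t v^{\langle\sigma'_i,\sigma^t_{-i}\rangle}(I) = v^{\langle\sigma'_i,\bar\sigma^T_{-i}\rangle}(I)$, because each terminal opponent reach $\pi^{\sigma_{-i}}_{-i}(z)$ is preserved by the definition of $\bar\sigma^T_{-i}$. Maximizing over $\sigma'_i$ and applying the standard CFR subtree-regret decomposition, $\max_{\sigma'_i}\sum_t v^{\langle\sigma'_i,\sigma^t_{-i}\rangle}(I)-\sum_t v^{\sigma^t}(I)\le \sum_{I'\in\mathcal{I}_i,\; I\sqsubseteq I'} R^T(I')$, yields $CBV^{\bar\sigma^T_{-i}}(I) - \bar v^T(I) \le O(1/\sqrt{T})$. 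Picking $T_{I,a}$ so large that both this error is at most $\delta_0/4$ and $\bar\sigma^T_{-i}\in U_{-i}$ gives $CBV^{\bar\sigma^T_{-i}}(I,a) - \bar v^T(I)\le -\delta_0/4$, so $\delta_{I,a}:=\delta_0/4$ works. The main obstacle is Step~2: the hypothesis is purely pointwise with no \emph{a priori} quantitative gap, and only the compactness of the opponent Nash equilibrium set---guaranteed in a finite two-player zero-sum game---lets us extract a uniform $\delta_0$ without needing to know which particular equilibrium $\bar\sigma^T$ actually approaches.
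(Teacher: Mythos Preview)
Your proof is correct and follows essentially the same three-step architecture as the paper's proof: extract a uniform positive gap $\delta$ over the opponent's Nash equilibrium set by compactness, use convergence of $\bar\sigma^T$ to the equilibrium set so that the gap $CBV^{\bar\sigma^T_{-i}}(I)-CBV^{\bar\sigma^T_{-i}}(I,a)$ eventually exceeds a fixed fraction of $\delta$, and then bound $CBV^{\bar\sigma^T_{-i}}(I)-\bar v^T(I)$ by an $O(1/\sqrt{T})$ regret term. The paper sets $\delta=\min_{\sigma_{-i}\in\Sigma^*}\big(CBV^{\sigma_{-i}}(I)-CBV^{\sigma_{-i}}(I,a)\big)$ directly, then argues that the sublevel set $\{CBV^{\sigma_{-i}}(I)-CBV^{\sigma_{-i}}(I,a)\le 3\delta/4\}$ contains no equilibrium and hence is eventually avoided by $\bar\sigma^T_{-i}$; you instead make the compactness/continuity step explicit and pass through a neighborhood $U_{-i}$ of $\mathcal{E}_{-i}$. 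These are dual phrasings of the same argument. One minor difference: for Step~3 you invoke the localized subtree regret decomposition $\max_{\sigma'_i}\sum_t v^{\langle\sigma'_i,\sigma^t_{-i}\rangle}(I)-\sum_t v^{\sigma^t}(I)\le \sum_{I'\succeq I} R^T(I')$, which is the clean way to control $CBV^{\bar\sigma^T_{-i}}(I)-\bar v^T(I)$; the paper states the global bound $R_i^T\le\sum_{I} R^T(I)$ and asserts the local inequality, which is terser but relies on the same underlying decomposition. Your explicit handling of the fact that CFR need not converge to a \emph{single} equilibrium (only to the equilibrium \emph{set}) is also a useful clarification.
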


While such a constant $T_{I,a}$ exists for any suboptimal action, Total RBP cannot determine whether or when $T_{I,a}$ is reached. So, it is still necessary to check whether (\ref{eq:rbpcondition}) is satisfied whenever (\ref{eq:totalrbp}) no longer holds, and to recalculate how much longer $D(I,a)$ can safely be pruned. This requires the algorithm to periodically calculate a best response (or near-best response) in $D(I,a)$. However, this (near-)best response calculation does not require knowledge of regret in $D(I,a)$, so it is still never necessary to store regret after iteration $T_{I,a}$ is reached.

While it is possible to discard regrets in $D(I,a)$ without penalty once pruning begins, regret is only half the space requirement of CFR. Every information set $I$ also stores a sum of the strategies played $\sum_{t = 1}^T \big(\pi_i^{\sigma^t}(I) \sigma^t(I)\big)$ which is normalized once CFR ends in order to calculate $\bar{\sigma}^T(I)$. Fortunately, if action $a$ in information set $I$ is pruned for long enough, then the stored cumulative strategy in $D(I,a)$ can also be discarded at the cost of a small increase in the distance of the final average strategy from a Nash equilibrium. Specifically, if $\pi_i^{\bar{\sigma}^T}(I,a) \le \frac{C}{\sqrt{T}}$, where $C$ is some constant, then setting $\bar{\sigma}^T(I,a) = 0$ and renormalizing $\bar{\sigma}^T(I)$, and setting $\bar{\sigma}^T(I',a') = 0$ for $I' \in D(I,a)$, can result in at most $\frac{C |I| \Delta}{\sqrt{T}}$ higher exploitability for the whole strategy $\bar{\sigma}^T$. Since CFR only guarantees that $\bar{\sigma}^T$ is a $\frac{2 |\mathcal{I}| \Delta \sqrt{|A|}}{\sqrt{T}}$-Nash equilibrium anyway, $\frac{C |I| \Delta}{\sqrt{T}}$ is only a constant factor of the bound. If an action is pruned from $T'$ to $T$, then $\sum_{t=1}^T \big(\pi_i^{\sigma^t}(I)\sigma^t(I,a)\big) \le \frac{T'}{T}$. Thus, if an action is pruned for long enough, then eventually $\sum_{t=1}^T \big(\pi_i^{\sigma^t}(I)\sigma^t(I,a)\big) \le \frac{C}{\sqrt{T}}$ for any $C$, so $\sum_{t=1}^T \big(\pi_i^{\sigma^t}(I)\sigma^t(I,a)\big)$ could be set to zero (as well as all descendants of $I \cdot a$), while suffering at most a constant factor increase in exploitability. As more iterations are played, this penalty will continue to decrease and eventually be negligible. The constant $C$ can be set by the user: a higher $C$ allows the average strategy to be discarded sooner, while a lower $C$ reduces the potential penalty in exploitability.

We define $\mathcal{I}_{S}$ as the set of information sets that are not guaranteed to be asymptotically pruned by Theorem~\ref{theorem:rbpspace}. Specifically, $I \in \mathcal{I}_{S}$ if $I \not \in D(I',a')$ for some $I'$ and $a' \in A(I')$ such that for every opponent Nash equilibrium strategy $\sigma^*_{-P(I')}$, $CBV^{\sigma_{-P(I')}^*}(I',a') < CBV^{\sigma_{-P(I')}^*}(I')$. Theorem~\ref{theorem:rbpspace} implies the following.
\begin{corollary}
In a two-player zero-sum game with some threshold on the average strategy $\frac{C}{\sqrt{T}}$ for $C > 0$, after a finite number of iterations, CFR with Total RBP requires only $O\big(|\mathcal{I}_S||A|\big)$ space.
\label{corollary:rbpspace}
\end{corollary}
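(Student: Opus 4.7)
The plan is to combine Theorem~\ref{theorem:rbpspace} with the Total RBP pruning dynamics to show that, after a finite number of iterations, every information set outside $\mathcal{I}_S$ stores neither regret nor cumulative strategy. Let $\mathcal{S}$ denote the (finite) collection of pairs $(I',a')$ for which every opponent Nash equilibrium strategy $\sigma^*_{-P(I')}$ satisfies $CBV^{\sigma^*_{-P(I')}}(I',a') < CBV^{\sigma^*_{-P(I')}}(I')$. Theorem~\ref{theorem:rbpspace} supplies, for each such pair, constants $T_{I',a'}$ and $\delta_{I',a'} > 0$; set $T^* = \max_{(I',a') \in \mathcal{S}} T_{I',a'}$ and $\delta^* = \min_{(I',a') \in \mathcal{S}} \delta_{I',a'}$, which are finite with $\delta^* > 0$ because $\mathcal{S}$ is finite.

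The main obstacle is establishing \emph{permanence} of pruning: Total RBP issues only finite pruning windows, so without a uniform positive gap a branch could oscillate between pruned and unpruned states forever, forcing its regret cells to be reallocated infinitely often. I handle this as follows. Because any CBR is a $T$-near CBR, $NBV^{\bar{\sigma}^T_{-i},T}(I',a') \le CBV^{\bar{\sigma}^T_{-i}}(I',a')$, so for $T \ge T^*$ Theorem~\ref{theorem:rbpspace} gives $T \cdot NBV^{\bar{\sigma}^T_{-i},T}(I',a') \le \sum_{t=1}^T v^{\sigma^t}(I') - T\delta^*$. Condition (\ref{eq:rbpcondition}) therefore holds with slack at least $T\delta^*$, and the window length guaranteed by (\ref{eq:prunesimple}) is at least $T\delta^*/\Delta$, which grows linearly in $T$. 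When the window expires at some iteration $T + T' \ge T^*$, re-applying Theorem~\ref{theorem:rbpspace} at $T + T'$ shows that (\ref{eq:rbpcondition}) still has strictly positive slack, so pruning immediately re-qualifies with no unpruned iteration in between. Hence the regret in $D(I',a')$ can be discarded at $T^*$ and never needs to be reallocated.

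Finally, once $(I',a')$ is perpetually pruned from time $T^*$ onward, $\sum_{t=1}^T \pi_i^{\sigma^t}(I')\sigma^t(I',a')$ is frozen at its value at $T^*$, so $\pi_i^{\bar{\sigma}^T}(I',a') \le T^*/T$. For any threshold $C > 0$, once $T \ge (T^*/C)^2$ this ratio lies below $C/\sqrt{T}$, and the discussion immediately preceding the corollary shows the cumulative strategy at $(I',a')$ and throughout $D(I',a')$ can then be zeroed out with at most an $O(1/\sqrt{T})$ additive cost to exploitability, a constant factor of CFR's own convergence rate. Applying this step to each of the finitely many pairs in $\mathcal{S}$, after some finite iteration every information set $I \not\in \mathcal{I}_S$ is a descendant of some perpetually pruned action in $\mathcal{S}$ and thus stores neither regret nor cumulative strategy. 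Only information sets in $\mathcal{I}_S$ retain storage, each holding $O(|A|)$ entries, for a total of $O(|\mathcal{I}_S||A|)$ space.
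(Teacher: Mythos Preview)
Your proof is correct and follows essentially the same approach as the paper's own proof: apply Theorem~\ref{theorem:rbpspace} to each suboptimal pair $(I',a')$, observe that (\ref{eq:rbpcondition}) then holds for all sufficiently large $T$ so regret in $D(I',a')$ need never be stored again, and verify that after $T \ge (T_{I',a'}/C)^2$ the average-strategy threshold is met so the cumulative strategy can be discarded as well. Your treatment is in fact more explicit than the paper's on two points---the use of $NBV \le CBV$ to transfer the CBV gap of Theorem~\ref{theorem:rbpspace} to the NBV-based condition (\ref{eq:rbpcondition}), and the argument that pruning immediately re-qualifies when a window expires---but these are elaborations of the same reasoning rather than a different route.
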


\subsection{Total RBP Has a Better Convergence Bound}
\label{sec:rbpbound}
We now prove that Total RBP in CFR speeds up convergence to an $\epsilon$-Nash equilibrium. Section~\ref{sec:warmrbp} proved that CFR with Total RBP converges in the same number of iterations as CFR alone. In this section, we prove that Total RBP allows each iteration to be traversed more quickly. Specifically, if an action $a \in A(I)$ is not a CBR to a Nash equilibrium, then $D(I,a)$ need only be traversed $O(\ln(T))$ times over $T$ iterations. Intuitively, as both players converge to a Nash equilibrium, actions that are not a counterfactual best response will eventually do worse than actions that are, so those suboptimal actions will accumulate increasing amounts of negative regret. This negative regret allows the action to be safely pruned for increasingly longer periods of time.

Specifically, let $S \subseteq H$ be the set of histories where $h \cdot a \in S$ if $h \in S$ and action $a$ is part of some CBR to some Nash equilibrium. Formally, $S$ contains $\emptyset$ and every history $h \cdot a$ such that $h \in S$ and $CBV^{\sigma^*_{-P(I)}}(I,a) = CBV^{\sigma^*_{-P(I)}}(I)$ for some Nash equilibrium $\sigma^*$.

\begin{theorem}
In a two-player zero-sum game, if both players choose strategies according to CFR with Total RBP, then conducting $T$ iterations requires only $O\big(|S| T + |H|\ln(T)\big)$ nodes to be traversed.
\label{theorem:rbpfast}
\end{theorem}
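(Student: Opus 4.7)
The plan is to obtain the bound by separately accounting for (i) nodes on the ``optimal skeleton'' $S$, which may be traversed every iteration, and (ii) nodes off the skeleton, each of which lies inside some pruned subtree $D(I,a)$ whose root action $a$ is not part of any CBR to any Nash equilibrium. For the first bucket the contribution is trivially at most $|S|T$. The entire interesting work is showing that the total traversals charged to the second bucket are $O(|H|\ln T)$.

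First I would invoke Theorem~\ref{theorem:rbpspace}: for every information set $I$ and every $a \in A(I)$ with $a$ suboptimal at every Nash equilibrium, there exist constants $T_{I,a}$ and $\delta_{I,a} > 0$ such that for all $T \ge T_{I,a}$,
\begin{equation*}
\frac{\sum_{t=1}^{T} v^{\sigma^{t}}(I)}{T} - CBV^{\bar{\sigma}^{T}_{-i}}(I,a) \; \ge \; \delta_{I,a}.
\end{equation*}
Since $NBV \le CBV$ by definition of the $T$-near CBR, the same inequality holds with $NBV$ in place of $CBV$. Plugging this into the pruning-duration formula (\ref{eq:prunesimple}) gives that at any iteration $T \ge T_{I,a}$ where pruning is (re-)initiated on $(I,a)$, the safe pruning length satisfies
\begin{equation*}
T' \; \ge \; \frac{\delta_{I,a}}{U(I,a)-L(I)}\, T \; =: \; \alpha_{I,a}\, T,
\end{equation*}
for a positive constant $\alpha_{I,a}$ that depends only on the game, not on $T$.

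Next I would turn this geometric expansion into an $O(\ln T)$ bound on the number of times $D(I,a)$ is traversed. Let $t_{0} < t_{1} < t_{2} < \dots$ be the iterations at which a pruning episode on $(I,a)$ begins (after $T_{I,a}$). The above inequality gives $t_{k+1} \ge (1+\alpha_{I,a})\, t_{k}$, so only $O(\log_{1+\alpha_{I,a}} T) = O(\ln T)$ such episodes can occur before iteration $T$. Each episode costs at most one full traversal of $D(I,a)$ to verify the pruning condition and to set the new regrets via (\ref{eq:pruneregreta})--(\ref{eq:pruneregret}); the iterations strictly inside the episode contribute nothing to $D(I,a)$. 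Before $T_{I,a}$ the subtree is traversed at most $T_{I,a}$ times, a game-dependent constant independent of $T$.

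Finally I would sum the charges. For each history $h \notin S$, walk from the root until we first cross an action $a$ that is not a CBR to any Nash equilibrium; let $(I^{*}(h), a^{*}(h))$ be that action. The histories $h$ with a common $(I^{*}, a^{*})$ form disjoint subtrees whose total size is at most $|H|$. Each such $h$ is visited only when its owning subtree $D(I^{*}, a^{*})$ is fully traversed, which by the previous paragraph happens $O(T_{I^{*},a^{*}} + \ln T)$ times; since $T_{I^{*},a^{*}}$ is an $O(1)$ constant, this is $O(\ln T)$. Summing over all off-skeleton histories gives $O(|H|\ln T)$ traversals. Adding the $O(|S|T)$ contribution from the on-skeleton histories yields the stated bound. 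The main obstacle I anticipate is the bookkeeping for \emph{nested} pruned subtrees, which is resolved by the observation that traversal at an outer pruning root stops before descending, so charging each off-skeleton $h$ to its outermost suboptimal ancestor avoids any double-counting.
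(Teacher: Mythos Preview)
Your proposal is correct and follows essentially the same route as the paper: invoke Theorem~\ref{theorem:rbpspace} to get a persistent gap $\delta_{I,a}$ for each off-skeleton root action, feed that gap into the pruning-duration formula to obtain pruning intervals of length $\Omega(T)$, conclude that each off-skeleton node is visited $O(\ln T)$ times, and add the trivial $|S|T$ term. The only cosmetic difference is that the paper packages the $O(\ln T)$ step as a separate lemma and phrases it via a harmonic-series amortization ($C/t$ work per iteration summed over $t$) rather than your geometric recurrence $t_{k+1}\ge(1+\alpha_{I,a})t_k$; these are equivalent bookkeeping for the same geometric growth of pruning intervals.
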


The definition of $S$ uses properties of the Nash equilibria of the game, and an action $a \in A(I)$ not in $S$ is only guaranteed to be pruned by RBP after some $T_{I,a}$ is reached, which also depends on the Nash equilibria of the game. Since CFR converges to only an $\epsilon$-Nash equilibrium, CFR cannot determine with certainty which nodes are in $S$ or when $T_{I,a}$ is reached. Nevertheless, both $S$ and $T_{I,a}$ are fixed properties of the game.
\section{Experiments}
We compare Total RBP to Interval RBP, to only partial pruning, and to no pruning at all. We also show the amount of space used by Total RBP over the course of the run.  The experiments are conducted on Leduc Hold'em~\cite{Southey05:Bayes} and Leduc-5~\cite{Brown15:Regret-Based}. Leduc Hold'em is a common benchmark in imperfect-information game solving because it is small enough to be solved but still strategically complex. In Leduc Hold'em, there is a deck consisting of six cards: two each of Jack, Queen, and King. There are two rounds. In the first round, each player places an ante of $1$ chip in the pot and receives a single private card. A round of betting then takes place with a two-bet maximum, with Player 1 going first. A public shared card is then dealt face up and another round of betting takes place. Again, Player 1 goes first, and there is a two-bet maximum. If one of the players has a pair with the public card, that players wins. Otherwise, the player with the higher card wins. The bet size in the first round is $2$ chips, and $4$ chips in the second round. Leduc-5 is like Leduc Hold'em but larger: there are 5 bet sizes to choose from. In the first round a player may bet $0.5$, $1$, $2$, $4$, or $8$ chips, while in the second round a player may bet $1$, $2$, $4$, $8$, or $16$ chips.

Results are presented for both CFR and CFR+. Nodes touched is a hardware and implementation-independent proxy for time. Overhead costs are counted in nodes touched. CFR+ is a variant of CFR in which a floor on regret is set at zero and each iteration is weighted linearly in the average strategy (that is, iteration $t$ is weighted by $t$) rather than each iteration being weighted equally. Since Interval RBP can only prune negative-regret actions, Interval RBP modifies the definition of CFR+ so that regret can be negative, but immediately jumps up to zero as soon as regret increases. Total RBP does not require this modification. Both forms of RBP modify the behavior of CFR+ because without pruning, CFR+ would put positive probability on an action as soon as its regret increases, while RBP waits until pruning is over. This is not, by itself, a problem. However, CFR+'s linear weighting of the average strategy is only guaranteed to converge to a Nash equilibrium if pruning does not occur. While pruning does well empirically with CFR+, the convergence is noisy. This noise can be reduced by using the lowest-exploitability average strategy profile found so far. We do this in the experiments.

Figure~\ref{fig:rbpspace} shows the reduction in space needed to store the average strategy and regrets for Total RBP---for various values of the constant threshold $C$, where an action's probability is set to zero if it is reached with probability less than $\frac{C}{\sqrt{T}}$ in the average strategy, as we explained in Section~\ref{sec:rbpspace}. In both games, a threshold between 0.01 and 0.1 performed well in both space and number of iterations, with the lower thresholds converging somewhat faster and the higher thresholds reducing space faster. We also tested thresholds below 0.01, but the speed of convergence was essentially the same as when using 0.01. In Leduc, all variants resulted in a quick drop-off in space to about half the initial amount. In Leduc-5, a threshold of 0.1 resulted in a factor of 10 reduction in space for CFR+, and a factor of 7 reduction for CFR. In the case of CFR, this space reduction factor appears to continue to increase.
\begin{figure}[!h]
	\centering
	\subfloat[Leduc Hold'em]{{\includegraphics[width=6.97cm]{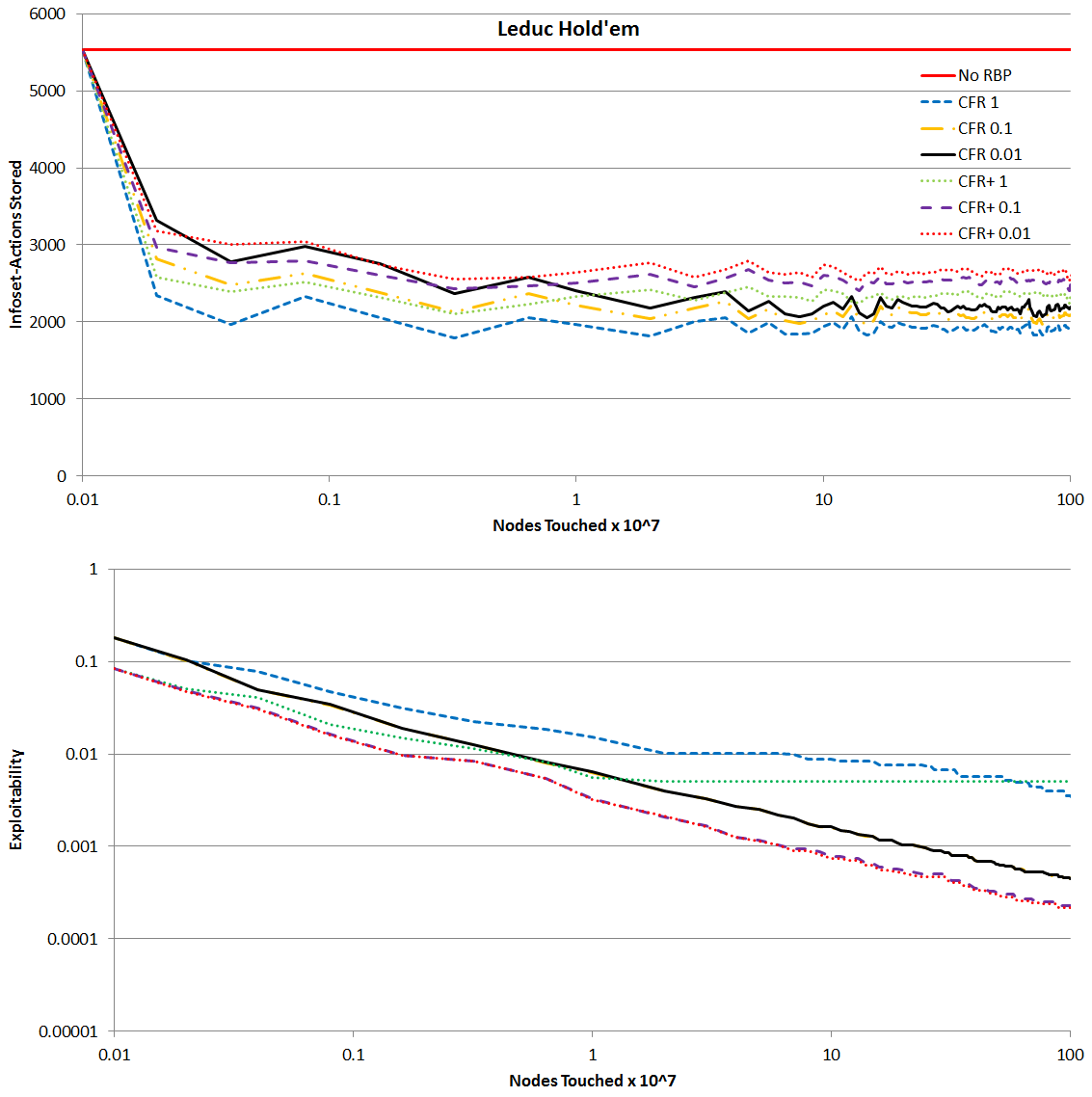}}}
	\subfloat[Leduc-5 Hold'em]{{\includegraphics[width=6.97cm]{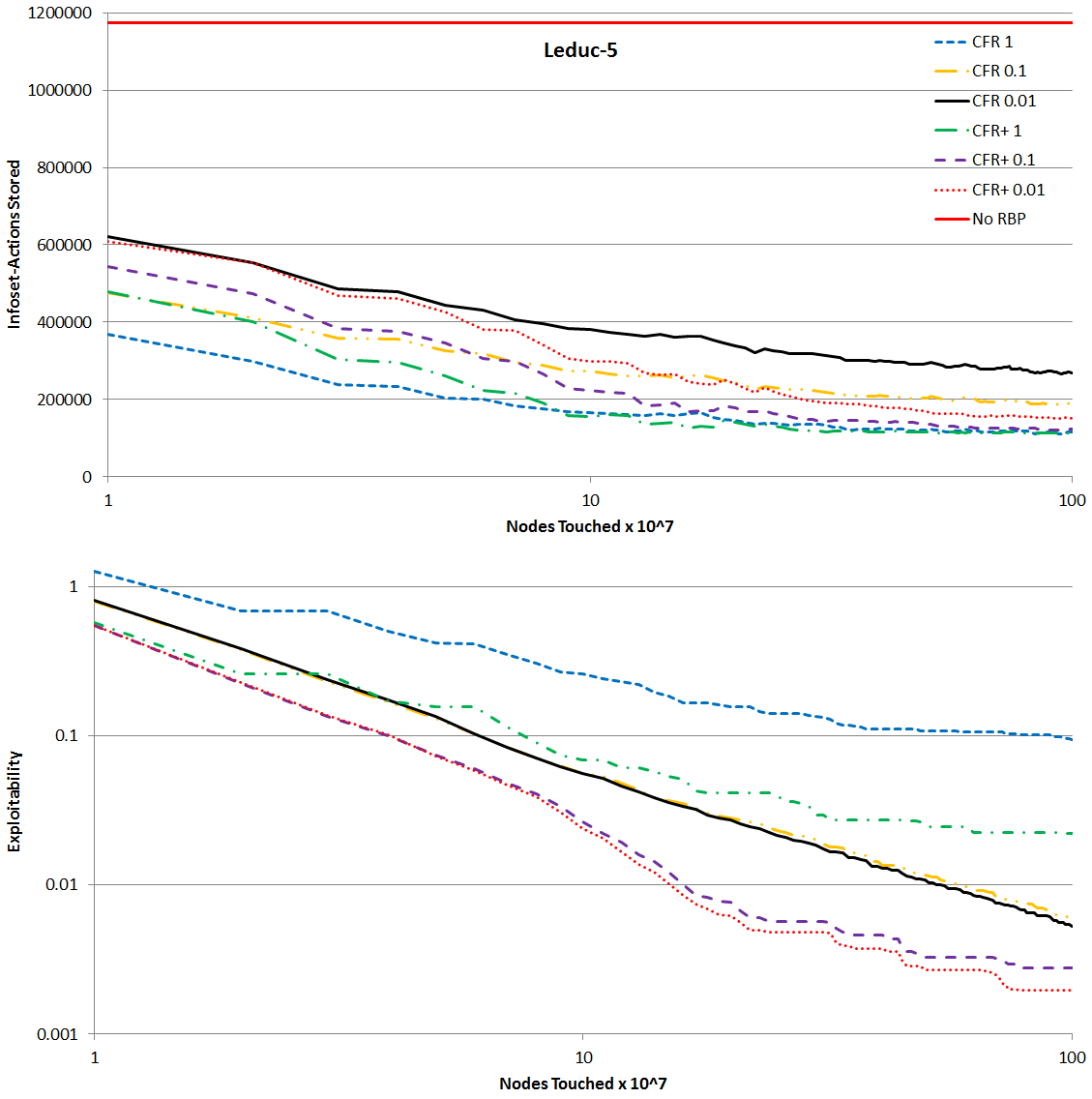}}}
	\caption{Convergence and space required for CFR and CFR+ with Total RBP.}
	\label{fig:rbpspace}
\end{figure}

Figure~\ref{fig:rbpconverge} compares the convergence rates of Total RBP, Interval RBP, and only partial pruning for both CFR and CFR+. In Leduc, Total RBP and Interval RBP perform comparably when added to CFR. When added to CFR+, Interval RBP does significantly better. In Leduc-5, which is a far larger game, Total RBP outperforms Interval RBP by a factor of 2 when added to CFR. When added to CFR+, Total RBP initially does far better but its performance is eventually surpassed by Interval RBP. This may be due to the noisy performance of CFR+ with RBP.
\begin{figure}[!h]
	\centering
	\subfloat[Leduc Hold'em]{{\includegraphics[width=6.97cm]{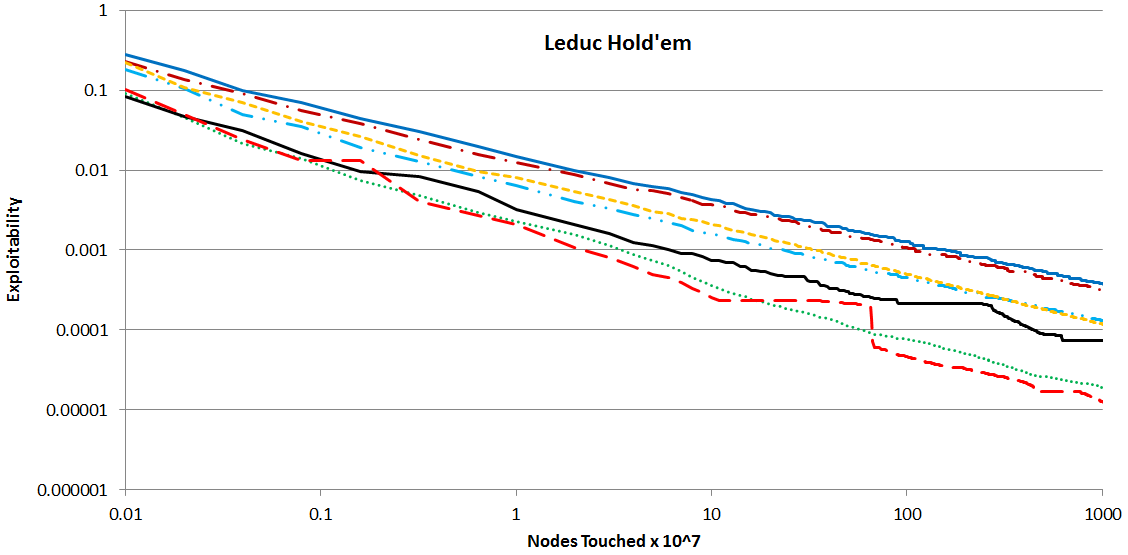}}}
	\subfloat[Leduc-5 Hold'em]{{\includegraphics[width=6.97cm]{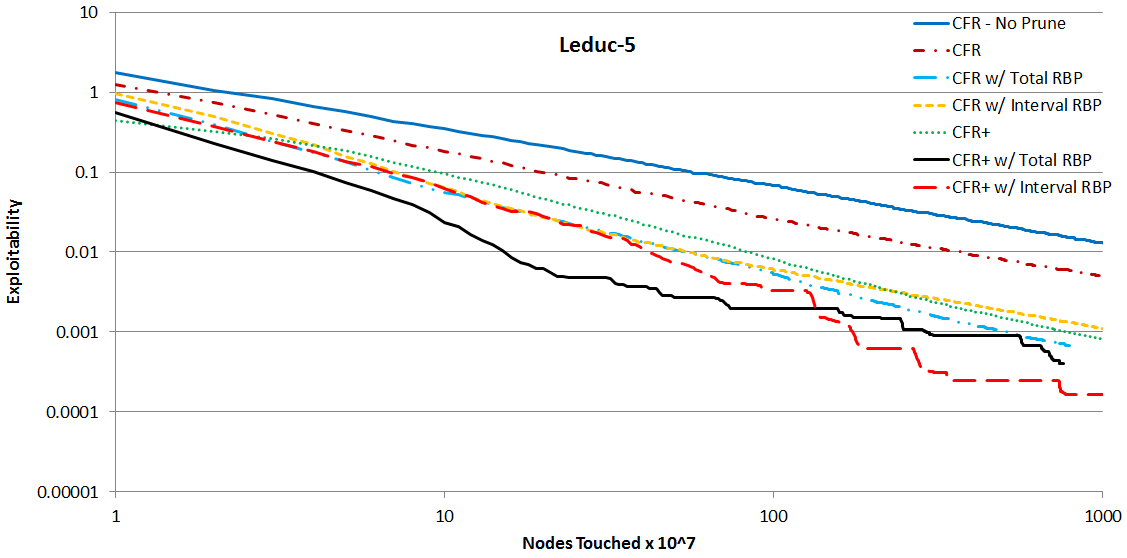}}}
	\caption{Convergence for CFR and CFR+ with only partial pruning, with Interval RBP, and with Total RBP. ``CFR - No Prune'' is CFR without any pruning.}
	\label{fig:rbpconverge}
\end{figure}

\vspace{-0.075in}
\section{Conclusions}
We introduced Total RBP, a new form of regret-based pruning that provably reduces both the space needed to solve an imperfect-information game and the time needed to reach an $\epsilon$-Nash equilibrium. This addresses both of the major bottlenecks in solving large imperfect-information games. Experimentally, Total RBP reduced the space needed to solve a game by an order of magnitude, with the reduction factor increasing with game size.

\newpage

\bibliographystyle{plain}

\clearpage

\newpage

\appendix
\noindent {\LARGE \bf Appendix}

In the appendices we present the proofs, and additional lemmas that are used in the proofs.

\section{Lemma~\ref{lemma:rbpwarm}}

Lemma~\ref{lemma:rbpwarm} proves that if (\ref{eq:rbpcondition}) is satisfied for some action $a \in A(I)$ on iteration $T$, then the value of action $a$ and all its descendants on every iteration played so far can be set to the $T$-near counterfactual best response value. The same lemma holds if one replaces the $T$-near counterfactual best response values with exact counterfactual best response values. The proof for Lemma~\ref{lemma:rbpwarm} draws from recent work on warm starting CFR using only an average strategy profile~\cite{Brown16:Strategy-Based}.

\begin{lemma}
	Assume $T$ iterations of CFR with RM have been played in a two-player zero-sum game.
	If $T\big(NBV^{\bar{\sigma}^{T}_{-i},T}(I,a)\big) \le \sum_{t=1}^T v^{\sigma^t}(I)$ and one sets $v^{\sigma^t}(I,a) = NBV^{\bar{\sigma}^{T}_{-i},T}(I,a)$ for each $t \le T$ and for each $I' \in D(I,a)$ sets $v^{\sigma^t}(I',a') = NBV^{\bar{\sigma}^{T}_{-i},T}(I',a')$ and $v^{\sigma^t}(I') = NBV^{\bar{\sigma}^{T}_{-i},T}(I')$ then after $T'$ additional iterations of CFR with RM, the bound on exploitability of $\bar{\sigma}^{T + T'}$ is no worse than having played $T + T'$ iterations of CFR with RM unaltered.
	\label{lemma:rbpwarm}
\end{lemma}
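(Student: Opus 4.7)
The plan is to show that the reset operation preserves the standard CFR invariant $\Phi(R^T(I^*)) \le y^T_{I^*}$ at every information set, and then let the usual regret-matching analysis take over for the remaining $T'$ iterations, yielding the same exploitability bound as unaltered CFR.

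First I would verify the invariant immediately after the reset, checking the three types of information sets separately. At $I$ itself, only the regret of action $a$ is altered, and the new value $R^T(I,a) = T \cdot NBV^{\bar{\sigma}^T_{-i},T}(I,a) - \sum_{t=1}^T v^{\sigma^t}(I)$ is $\le 0$ by the pruning hypothesis, so $R_+^T(I,a) = 0$ while every $R_+^T(I,a')$ for $a' \ne a$ is unchanged; thus $\Phi(R^T(I))$ can only decrease relative to its pre-reset value, which was already $\le y^T_I$ by standard CFR. At each $I' \in D(I,a)$ every action $a'$ has new regret $T\big(NBV^{\bar{\sigma}^T_{-i},T}(I',a') - NBV^{\bar{\sigma}^T_{-i},T}(I')\big)$; squaring, summing over $a'$, and applying the $T$-near CBR inequality (\ref{eq:nearbest}) gives $\Phi(R^T(I')) \le T^2 \cdot x^T_{I'}/T^2 = x^T_{I'} \le y^T_{I'}$. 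At every other information set nothing is touched, so the invariant is inherited unchanged from unaltered CFR.

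Next I would extend the invariant through iterations $T{+}1, \ldots, T{+}T'$. The textbook regret-matching potential-function analysis shows that at any information set $I^*$ one iteration of CFR with RM increases $\Phi(R^t(I^*))$ by at most $|A(I^*)|\Delta(I^*)^2$, regardless of the starting regret values. Since $y^T_{I^*} = |A(I^*)|\Delta(I^*)^2 T$ and the reset respects this, after $T'$ more iterations $\Phi(R^{T+T'}(I^*)) \le |A(I^*)|\Delta(I^*)^2 (T+T') = y^{T+T'}_{I^*}$. Taking square roots yields $R^{T+T'}(I^*) \le \Delta(I^*)\sqrt{|A(I^*)|}\sqrt{T+T'}$, matching the per-information-set bound of unaltered CFR.

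The main obstacle, and the final step, is converting the bound on effective regrets into a bound on the actual exploitability of $\bar{\sigma}^{T+T'}$, since we have artificially overwritten counterfactual values rather than accumulating them from real play. Here I would invoke the strategy-based warm-starting framework of~\cite{Brown16:Strategy-Based}: because the reset values were taken to be counterfactual values of a (near) best response to $\bar{\sigma}^T_{-i}$, they coincide with the counterfactual values of a valid virtual strategy sequence against the opponent's realized play, so the per-information-set bounds sum in the usual CFR way to $R_i^{T+T'} \le \sum_{I^* \in \mathcal{I}_i} R^{T+T'}(I^*) \le \sum_{I^* \in \mathcal{I}_i} \Delta(I^*)\sqrt{|A(I^*)|}\sqrt{T+T'}$. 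This is exactly the bound that $T+T'$ iterations of unaltered CFR with RM would produce, completing the proof.
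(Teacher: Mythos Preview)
Your proposal is correct and follows essentially the same route as the paper's proof: verify that the reset keeps $\Phi(R^T(\cdot))$ within the CFR bound $y^T_{\cdot}$ at $I$, at each $I'\in D(I,a)$, and at all other information sets, then hand off to the warm-starting result of~\cite{Brown16:Strategy-Based} to propagate the per--information-set bounds through the remaining $T'$ iterations and translate them into an exploitability guarantee on $\bar{\sigma}^{T+T'}$. The paper simply cites Theorem~2 of~\cite{Brown16:Strategy-Based} for your second and third paragraphs, whereas you partially unpack its content (the per-step potential increment and the virtual-sequence interpretation); your remark that $\Phi(R^T(I))$ ``can only decrease'' is in fact slightly sharper than the paper's ``unchanged,'' but the distinction is immaterial to the bound.
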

\begin{proof}
	The proof builds upon Theorem~2 in \cite{Brown16:Strategy-Based}. Assume $T\big(NBV^{\bar{\sigma}^{T}_{-i},T}(I,a)\big) \le \sum_{t=1}^T v^{\sigma^t}(I)$. We wish to warm start to $T$ iterations. For each $I' \in D(I,a)$ set $v^{\sigma^t}(I',a') = NBV^{\bar{\sigma}^{T}_{-i},T}(I',a')$ and $v^{\sigma^t}(I') = NBV^{\bar{\sigma}^{T}_{-i},T}(I')$ and set $v^{\sigma^t}(I,a) = NBV^{\bar{\sigma}^{T}_{-i},T}(I,a)$ for all $t \le T$. For every other action, leave regret unchanged. For each $I' \in D(I,a)$ we know by construction that $\Phi(R^T(I'))$ is within the CFR bound $y_{I'}^T$ after changing regret. By assumption $T\big(NBV^{\bar{\sigma}^{T}_{-i},T}(I,a)\big) \le \sum_{t=1}^T v^{\sigma^t}(I)$, so $R^T(I,a) \le 0$ and therefore $\Phi(R^T(I))$ is unchanged. Finally, since the $T$ iterations were played according to CFR with RM and regret is unchanged for every other information set $I''$, so
	the conditions for Theorem~2 in \cite{Brown16:Strategy-Based} hold for every information set, and therefore we can warm start to $T$ iterations of CFR with RM with no penalty to the convergence bound.
\end{proof}

\section{Proof of Theorem~\ref{theorem:totalrbp}}
\begin{proof}
	From Lemma~\ref{lemma:rbpwarm} we can immediately set regret for $a \in A(I)$ to $v^{\sigma^t}(I,a) = NBV^{\bar{\sigma}^{T}_{-i},T}(I,a)$. By construction of $T'$, $R^t(I,a)$ is guaranteed to be nonpositive for $T \le t \le T + T'$ and therefore $\sigma^t(I,a) = 0$. Thus, $\bar{\sigma}^{T + T'}_i(I')$ for $I' \in D(I,a)$ is identical regardless of what is played in $D(I,a)$ during $T \le t \le T + T'$.
	
	Since $(T + T')\big(NBV^{\bar{\sigma}^{T+T'}_{-i},T+T'}(I,a)\big) \le T\big(NBV^{\bar{\sigma}^T_{-i},T}(I,a)\big) + T' \big(U(I,a)\big)$ and $\sum_{t=1}^{T+T'} v^{\sigma^t}(I) \ge \sum_{t=1}^{T} v^{\sigma^t}(I) + T' \big(L(I)\big)$, so by the definition of $T'$, $(T + T')\big(NBV^{\bar{\sigma}^{T+T'}_{-i},T+T'}(I,a)\big) \le \sum_{t=1}^{T+T'} v^{\sigma^t}(I)$. So if regrets in $D(I,a)$ and $R^{T + T'}(I,a)$ are set according to Lemma~\ref{lemma:rbpwarm}, then after $T''$ additional iterations of CFR with RM, the bound on exploitability of $\bar{\sigma}^{T + T' + T''}$ is no worse than having played $T + T' + T''$ iterations of CFR with RM from scratch.
\end{proof}

\section{Proof of Theorem~\ref{theorem:rbpspace}}
\begin{proof}
	Consider an information set $I$ and action $a \in A(I)$ where for every opponent Nash equilibrium strategy $\sigma_{-P(I)}^*$, $CBV^{\sigma_{-P(I)}^*}(I,a) < CBV^{\sigma_{-P(I)}^*}(I)$. Let $i = P(I)$. Let $\delta = \min_{\sigma_{-i} \in \Sigma^*} \big(CBV^{\sigma_{-i}}(I) - CBV^{\sigma_{-i}}(I,a)\big)$ where $\Sigma^*$ is the set of Nash equilibria.
	Let $\sigma'_{-i} = \argmax_{\sigma_{-i} \in \Sigma_{-i} \mid CBV^{\sigma_{-i}}(I) - CBV^{\sigma_{-i}}(I,a) \le \frac{3 \delta}{4}} u_{-i} (\sigma_{-i}, BR(\sigma_{-i}))$ Since $\sigma_{-i}'$ is not a Nash equilibrium strategy and CFR converges to a Nash equilibrium strategy for both players, so
	there exists a $T_{\delta}$ such that for all $T \ge T_{\delta}$, $CBV^{\bar{\sigma}^T_{-i}}(I) - CBV^{\bar{\sigma}_{-i}^T}(I,a) > \frac{3 \delta}{4}$. Let $T'_{I,a} = \frac{4|\mathcal{I}|^2\Delta^2|A|}{\delta^2}$. For $T \ge T'_{I,a}$ since $R_i^T \le \sum_{I \in \mathcal{I}_i} R^T(I)$, so $CBV^{\bar{\sigma}_{-i}^T}(I) - \sum_{t = 1}^T v^{\sigma^t}(I) \le \frac{\delta}{2}$. Let $T_{I,a} = \max(T'_{I,a}, T_{\delta})$ and $\delta_{I,a} = \frac{\delta}{4}$. Then for $T \ge T_{I,a}$, $CBV^{\bar{\sigma}^T_{-i}}(I,a) - \frac{\sum_{t = 1}^T v^{\sigma^t}(I)}{T} \le -\delta_{I,a}$.
\end{proof}

\section{Proof of Corollary~\ref{corollary:rbpspace}}
\begin{proof}
Let $I \not \in \mathcal{I}_S$. Then $I \in D(I',a')$ for some $I'$ and $a' \in A(I')$ such that for every opponent Nash equilibrium strategy $\sigma^*_{-P(I')}$, $CBV^{\sigma_{-P(I')}^*}(I',a') < CBV^{\sigma_{-P(I')}^*}(I')$. Applying Theorem~\ref{theorem:rbpspace}, this means there exists a $T_{I',a'}$ and $\delta_{I',a'} > 0$ such that for $T \ge T_{I',a'}$, $CBV^{\bar{\sigma}^T_{-i}}(I',a') - \frac{\sum_{t = 1}^T v^{\sigma^t}(I')}{T} \le -\delta_{I',a'}$. So (\ref{eq:rbpcondition}) always applies for $T \ge T_{I',a'}$ for $I'$ and $a'$ and $I$ will always be pruned. Since (\ref{eq:pruneregret}) does not require knowledge of regret, it need not be stored for $I$.

Since $D(I',a')$ will always be pruned for $T \ge T_{I',a'}$, so for any $T \ge \frac{(T_{I',a'})^2}{C^2}$ iterations for some constant $C > 0$, $\pi_i^{\bar{\sigma}^T}(I) \le \frac{C}{\sqrt{T}}$, which satisfies the threshold of the average strategy. Thus, the average strategy in $D(I,a)$ can be discarded.
\end{proof}

\section{Lemma~\ref{lemma:negative}}
\begin{lemma}
	If for all $T \ge T'$ iterations of CFR with RBP, $T \big(CBV^{\bar{\sigma}^T}(I,a)\big) - \sum_{t = 1}^T v^{\sigma^t}(I) \le -xT$ for some $x > 0$, then any history $h'$ such that $h \cdot a \sqsubseteq h'$ for some $h \in I$ need only be traversed at most $O\big(\ln(T)\big)$ times.
	\label{lemma:negative}
\end{lemma}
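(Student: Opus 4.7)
The plan is to show geometric growth in the gap between consecutive traversals of the descendants of $I \cdot a$. Whenever a traversal of $I$ occurs at some iteration $T_k \ge T'$, the hypothesis guarantees a negative margin of at least $xT_k$ between $T_k \cdot CBV^{\bar{\sigma}^{T_k}_{-i}}(I,a)$ and $\sum_{t=1}^{T_k} v^{\sigma^t}(I)$. This margin is exactly what (\ref{eq:prunesimple}) converts into pruning duration: Total RBP skips action $a$ for a number of iterations proportional to $T_k$, yielding a recurrence $T_{k+1} \ge (1+c)T_k$ for a constant $c>0$ that depends only on $x$ and the payoff range $U(I,a) - L(I)$.

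The first step is to reconcile the fact that the hypothesis is stated in terms of $CBV$ while the Total RBP pruning trigger (\ref{eq:rbpcondition}) and duration (\ref{eq:prunesimple}) use $NBV$. Since any CBR automatically satisfies the near-CBR condition (\ref{eq:nearbest}) with $x^T_I = 0$, it lies in $\Sigma^{NBR}(\bar{\sigma}^T_{-i},T)$, and hence $NBV^{\bar{\sigma}^T_{-i},T}(I,a) \le CBV^{\bar{\sigma}^T_{-i}}(I,a)$. Plugging this inequality into the hypothesis shows both that (\ref{eq:rbpcondition}) is satisfied at every $T_k \ge T'$, and that the numerator in (\ref{eq:prunesimple}) is at least $xT_k$. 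Applying (\ref{eq:prunesimple}) then gives a pruning duration of at least $c\,T_k$ with $c = x/(U(I,a) - L(I))$, so the next traversal after $T_k$ cannot occur before iteration $(1+c)T_k$.

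Iterating the recurrence $T_{k+1} \ge (1+c)T_k$ from the first traversal past $T'$ gives $T_k \ge T'(1+c)^{k-1}$, so the number of such traversals that occur up to iteration $T$ is at most $1 + \log_{1+c}(T/T') = O(\ln T)$. Adding the at most $T'$ traversals that could have occurred before iteration $T'$ (a constant independent of $T$) yields $O(\ln T)$ total traversals of $I$ along action $a$; since pruning $a$ at $I$ blocks traversal of every history below $h \cdot a$ for any $h \in I$, the same bound applies to every descendant $h'$ as claimed.

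The main obstacle is primarily conceptual: one must notice the $CBV$-versus-$NBV$ mismatch and observe that any CBR is automatically a near-CBR, so the hypothesis on $CBV$ immediately implies the analogous statement for $NBV$. Once this observation is in place, the remainder is a routine geometric-series argument, and the additive ``+1'' iterations between consecutive traversals are harmlessly absorbed into the $cT_k$ term for $T_k \ge 1$.
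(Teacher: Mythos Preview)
Your proposal is correct and follows essentially the same approach as the paper: both note that $NBV^{\bar\sigma_{-i}^T,T}(I,a)\le CBV^{\bar\sigma_{-i}^T}(I,a)$ so that the hypothesis triggers pruning of duration proportional to $T$, and then conclude $O(\ln T)$ traversals. The only cosmetic difference is in the counting step---the paper amortizes the constant work over the pruning window to get cost $C/t$ per iteration and sums the harmonic series, whereas you set up the geometric recurrence $T_{k+1}\ge(1+c)T_k$ and take a logarithm; these are equivalent presentations of the same argument.
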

\begin{proof}
Let $a \in A(I)$ be an action such that for all $T \ge T'$, $T\big(CBV^{\bar{\sigma}^T}(I,a)\big) - \sum_{t = 1}^T v^{\sigma^t}(I) \le -xT$ for some $x > 0$. $NBV^{\bar{\sigma}^T_{-i},T}(I,a) \le CBV^{\bar{\sigma}^T_{-i}}$, so from Theorem~\ref{theorem:totalrbp}, $D(I,a)$ can be pruned for $m \ge \lfloor \frac{xT}{U(I,a) - L(I)} \rfloor$ iterations on iteration $T$. Thus, over iterations $T \le t \le T + m$, only a constant number of traversals must be done. So each iteration requires only $\frac{C}{m}$ work when amortized, where $C$ is a constant. Since $x$, $U(I,a)$, and $L(I)$ are constants, so on each iteration $t \ge T'$, only an average of $\frac{C}{t}$ traversals of $D(I,a)$ is required. Summing over all $t \le T$ for $T \ge T'$, and recognizing that $T'$ is a constant, we get that action $a$ is only taken $O\big(\ln(T)\big)$ over $T$ iterations. Thus, any history $h'$ such that $h \cdot a \sqsubseteq h'$ for some $h \in I$ need only be traversed at most $O\big(\ln(T)\big)$ times.
\end{proof}

\section{Proof of Theorem~\ref{theorem:rbpfast}}
\begin{proof}
	Consider an $h^* \not \in S$. Then there exists some $h \cdot a \sqsubseteq h^*$ such that $h \in S$ but $h \cdot a \not \in S$. Let $I = I(h)$ and $i = P(I)$. Since $h \cdot a \not \in S$ but $h \in S$, so for every Nash equilibrium $\sigma^*$, $CBV^{\sigma^*}(I,a) < CBV^{\sigma^*}(I)$. From Theorem~\ref{theorem:rbpspace}, there exists a $T_{I,a}$ and $\delta_{I,a} > 0$ such that after $T \ge T_{I,a}$ iterations of CFR, $CBV^{\bar{\sigma}^T_{-i}}(I,a) - \frac{\sum_{t = 1}^T v^{\sigma^t(I)}}{T} \le -\delta_{I,a}$. Thus from Lemma~\ref{lemma:negative}, $h^*$ need only be traversed at most $O\big(\ln(T)\big)$ times.
\end{proof}
\end{document}